\def\C{\mathbb C}
\renewcommand{\i}{{\mathrm i}}
\def\S{\mathcal S}
\def\rank{\mathrm{rank}}
\newtheorem{lem}{Lemma}[section]
\newtheorem{thm}[lem]{Theorem}
\newtheorem*{thm0}{Theorem}
\theoremstyle{definition}
\newtheorem{rem}[lem]{Remark}
\begin{document}

\begin{frontmatter}

\title{Tripartite connection condition for quantum graph vertex}
%
\author[label1]
{Taksu Cheon\corref{cor1}}
\ead{taksu.cheon@kochi-tech.ac.jp}
\author[label2,label3]
{Pavel Exner} 
\ead{exner@ujf.cas.cz}
\author[label1]
{Ond\v rej Turek}
\ead{ondrej.turek@kochi-tech.ac.jp}
\address[label1]
{Laboratory of Physics, Kochi University of Technology,
Tosa Yamada, Kochi 782-8502, Japan}
\address[label2]{Doppler Institute for Mathematical Physics and Applied
Mathematics, Czech Technical University,
B\v rehov{\'a} 7, 11519 Prague, Czechia}
\address[label3]{Department of Theoretical Physics, Nuclear Physics
Institute, Czech Academy of Sciences,
25068 \v{R}e\v{z} near Prague, Czechia}
\cortext[cor1]{corresponding author}

\date{\today}
\begin{abstract}
We discuss formulations of boundary conditions in a quantum graph vertex and demonstrate that the so-called $ST$-form can be further reduced up to a form more effective in certain applications:  In particular, in identifying the number of independent parameters for given ranks of two connection matrices, or in calculating the scattering matrix when both matrices are singular.  The new form of boundary conditions, called the $PQRS$-form, also gives a natural scheme to design generalized low and high pass quantum filters.
\end{abstract}
\begin{keyword}
Schrodinger operator \sep singular vertex
\sep boundary conditions
\PACS 03.65.-w \sep 03.65.Db \sep 73.21.Hb 
%
%
\end{keyword}

\end{frontmatter}


\section{Introduction}

Quantum graphs are becoming increasingly relevant as mathematical models of quantum wire based single electron devices. At the heart of quantum
graph is the behavior of quantum particle at a graph vertex, in general connecting $n$ graph edges, which can be regarded as a natural generalization
of singular point interaction in one dimension \cite{AG05}. At a glance it
is simple, but in reality a highly nontrivial object.

General mathematical characterizations of vertex couplings have been there for more than two decades. While at first general theory of self-adjoint extensions was used and the corresponding boundary conditions were worked
out for particular cases \cite{ES89}, in 1999 general conditions were written
in the form $A\Psi+B\Psi'=0$ by Kostrykin and Schrader \cite{KS99} with 
elegantly formulated requirements on $A$ and $B$. It
includes situations when one or both the matrices $A,B$ are singular; for
those cases alternative descriptions were developed \cite{Ku04, FKW07} which
employ projections to complements of the rank of these matrices.

One may wonder why the physical contents of the vertex couplings, including the singular cases, is of interest -- recall that most existing
models employ the most simple free coupling, often called Kirchhoff. The
main reason is that it can give us alternative means to control transport
through such graph structures which is the ultimate practical goal of these
investigations. No less important is that it gives theoretical tools to
analyze various classes of graphs -- recall, e.g., the use of
scale-invariant boundary conditions in investigation of radial tree
graphs, see \cite{NS00, SS02} and subsequent work of other authors.

Attempts to understand physical meaning of vertex coupling take different
routes. Some are ``constructive'', trying to approximate vertex with a
prescribed coupling by a family of graphs \cite{CS98, ENZ01, CE04, ET07, CET10} 
or various ``fat graphs'', see \cite{KZ01, EP09} and references therein. An alternative is
to look into scattering properties associated with a particular coupling
an to try to classify their type. Such a study was undertaken, in particular, in the article \cite{CT10}.

A drawback of the conditions $A\Psi+B\Psi'=0$ is that the matrix pair
$(A,B)$ determining the vertex coupling
is not unique. 
Our starting point, in this paper, is a particular unique
version of them called the $ST$-form, which was developed in \cite{CET10}, in which the matrices $A,B$ exhibit a specific rank-based reduction and contain two parametric submatrices, $S$ and $T$. Its
properties were further investigated in \cite{CT10}: the key point in this paper
is the observation that, at $k\to\infty$, scattering matrix is reduced to
the one obtained from scale-invariant coupling which generalizes the one
studied for a particle on a line by F\"ul\"op and Tsutsui \cite{FT00}, and also by
Solomyak and coauthors \cite{NS00, SS02}.
Namely, at this limit, the interaction is specified
only by $T$ and with the influence of $S$ vanishing asymptotically. At the
same time the opposite asymptotics, $k\to 0$, yields scattering matrix
reduced to the one obtained from what is in \cite{CT10} called reverse
F\"ul\"op-Tsutsui condition, being specified only by $\tilde{T}$ 
with the coupling matrix $\tilde{S}$ 
influencing only the second term of the asymptotics.

In this work, we show that there is another useful and unique form 
for $A$ and $B$, which we call $PQRS$ form, that lays a bridge between $ST$ and 
reverse $ST$-forms, and demonstrate its connections to the other unique
ways to write the coupling. 
In particular, we determine the number of
independent parameters which characterize classes of singular couplings
with ranks of $A$ and $B$ fixed. This new form turns out to be very 
useful in specifying 
F\"ul\"op-Tsutsui and reverse 
F\"ul\"op-Tsutsui forms at small and large $k$ limits. Also it is shown that 
certain ``zero-limits'' of $PQRS$-form lead to formulae that amount to the 
generalization of the classification of $n=3$ singular vertex \cite{CET09},
for which ``$Y$-junction'' can function as a spectral branching filter.

\section{Motivation}

\subsection{$ST$-form and its relation to the scattering matrix}

Generally, for the boundary conditions
\begin{equation}
\label{AB}
A\Psi+B\Psi'=0\,,
\end{equation}
the scattering matrix is given by the formula
\begin{eqnarray}
\label{S_AB}
\S(k)=-(A+\i kB)^{-1}(A-\i kB)
\end{eqnarray}
and thus its computation needs to invert a matrix $A+\i kB$ which is of the size $n\times n$. However, if one of the matrices $A,B$ has not full rank $n$, the size of the matrix to be inverted can be reduced. Let us demonstrate it below.

For any value of $r_B=\rank(B)$, any admissible boundary condition for a singular vertex 
in quantum graph~\eqref{AB}
can be equivalently expressed in the $ST$-form
\begin{eqnarray}
\label{ST}
\begin{pmatrix}
I^{(r_B)} & T \\
0 & 0
\end{pmatrix}
\Psi'=
\begin{pmatrix}
S & 0 \\
-T^* & I^{(n-r_B)}
\end{pmatrix}
\Psi
\end{eqnarray}
for certain $S$ and $T$, 
where the symbol $I^{(j)}$ denotes the identity matrix of size $j\times j$. 
In this formalism, the scattering matrix $\S(k)$ acquires the form
\begin{eqnarray}
\label{S(S,T,k)}
\S(k)=
&&\!\!\!\!\!\!\!\!\!
-I^{(n)}+
2
\begin{pmatrix}
\left(I^{(r_B)}+TT^*-\frac{1}{\i k}S\right)^{-1} & 
\left(I^{(r_B)}+TT^*-\frac{1}{\i k}S\right)^{-1}T \\
T^*\left(I^{(r_B)}+TT^*-\frac{1}{\i k}S\right)^{-1} & 
T^*\left(I^{(r_B)}+TT^*-\frac{1}{\i k}S\right)^{-1}T
\end{pmatrix}
\nonumber \\
=
&&\!\!\!\!\!\!\!\!\!
-I^{(n)}+
2\begin{pmatrix} I^{(r_B)} \\ T^* \end{pmatrix}
\left(I^{(r_B)}+TT^*-\frac{1}{\i k}S\right)^{-1}
\begin{pmatrix} I^{(r_B)} & T\end{pmatrix}
\end{eqnarray}
which is easier to be calculated than~\eqref{S_AB}, since one has to perform an inversion for a matrix $r_B\times r_B$.

Moreover, $\S(k)$ given by~\eqref{S(S,T,k)} can be expanded for high energies $k\gg1$:
Since the matrix $\left(I^{(m)}+TT^*-\frac{1}{\i k}S\right)^{-1}$ satisfies
\begin{eqnarray}
\left(I^{(r_B)}+TT^*-\frac{1}{\i k}S\right)^{-1}=
&&\!\!\!\!\!\!\!\!\!
\left[\left(I^{(r_B)}+TT^*\right)\left(I^{(r_B)}-\left(I^{(r_B)}+TT^*\right)^{-1}\frac{1}{\i k}S\right)\right]^{-1}
\nonumber \\
=
&&\!\!\!\!\!\!\!\!\!
\left(I^{(r_B)}+TT^*\right)^{-1}+\sum_{j=1}^{\infty}\left(\frac{1}{\i k}\right)^j\left[\left(I^{(r_B)}+TT^*\right)^{-1}S\right]^j
\left(I^{(r_B)}+TT^*\right)^{-1} ,
\end{eqnarray}
we have
\begin{eqnarray}
\S(k)=
&&\!\!\!\!\!\!\!\!\!
-I^{(n)}+
2\begin{pmatrix}
I^{(r_B)} \\ T^*
\end{pmatrix}
\left(I^{(r_B)}+TT^*\right)^{-1}
\begin{pmatrix}
I^{(r_B)} & T
\end{pmatrix}
\nonumber \\
&&\!\!\!\!\!\!\!\!
+ 
2\begin{pmatrix}
I^{(r_B)} \\ T^*
\end{pmatrix}
\sum_{j=1}^{\infty}\left(\frac{1}{\i k}\right)^j\left[\left(I^{(r_B)}+TT^*\right)^{-1}S\right]^j
\left(I^{(r_B)}+TT^*\right)^{-1}
\begin{pmatrix}
I^{(r_B)} & T
\end{pmatrix}\,,
\end{eqnarray}
and in particular we see that
\begin{eqnarray}
\lim_{k\to\infty}\S(k)=
-I^{(n)}+ 2
\begin{pmatrix}
\left(I^{(r_B)}+TT^*\right)^{-1} & \left(I^{(r_B)}+TT^*\right)^{-1}T \\
T^*\left(I^{(r_B)}+TT^*\right)^{-1} & T^*\left(I^{(r_B)}+TT^*\right)^{-1}T
\end{pmatrix} \, ,
\end{eqnarray}
i.e. the scattering matrix corresponding to the vertex coupling expressed in the form~\eqref{ST} tends to the scattering matrix of the \emph{scale invariant} vertex coupling expressed by~\eqref{ST} with $S=0$. The effect of matrix $S$ in~\eqref{ST} thus fades away for $k\to\infty$.

The $ST$-form itself does not allow us to expand $\S(k)$ at the same time at $k=0$ except for special cases when the submatrix $S$ is regular. If this expansion is required, we need to transform the $ST$-form into its reverse form
\begin{eqnarray}
\label{reverseST}
\begin{pmatrix}
I^{(r_A)} & \tilde{T} \\
0 & 0
\end{pmatrix}
\Psi=
\begin{pmatrix}
\tilde{S} & 0 \\
-\tilde{T}^* & I^{(n-r_A)}
\end{pmatrix}
\Psi'
\end{eqnarray}
where $r_A=\rank(A)$ and $\tilde{S},\tilde{T}$ are properly chosen matrices.
In a similar manner to the case of $ST$-form, we can find that
\begin{eqnarray}
\label{RS(S,T,k)}
\S(k)=
I^{(n)}-
2\begin{pmatrix} I^{(r_A)} \\ \tilde{T^*} \end{pmatrix}
\left(I^{(r_A)}+\tilde{T}\tilde{T^*}-\i k \tilde{S} \right)^{-1}
\begin{pmatrix} I^{(r_A)} & \tilde{T} \end{pmatrix}\,.
\end{eqnarray}
It is easy to see that the reverse $ST$-form allows one to expand $\S(k)$ at $k=0$, but generally not at $k\to\infty$, and also enables to find the zero-momentum limit which is given by
\begin{eqnarray}
\lim_{k\to0}\S(k)=
I^{(n)}- 2
\begin{pmatrix}
\left(I^{(r_A)}+\tilde{T}\tilde{T^*}\right)^{-1} & 
\left(I^{(r_A)}+\tilde{T}\tilde{T^*}\right)^{-1}\tilde{T} \\
\tilde{T^*} \left(I^{(r_A)}+\tilde{T}\tilde{T^*}\right)^{-1} & 
\tilde{T^*} \left(I^{(r_A)}+\tilde{T}\tilde{T^*}\right)^{-1}\tilde{T} 
\end{pmatrix}\,.
\end{eqnarray}

We conclude that both the $ST$-form and its reversed version generally simplify the matrix inversion needed for computation of $\S(k)$, but neither of them makes it possible to expand $\S(k)$ for $k\gg1$ and at the same time around $k=0$, except for special cases when $S$, $\tilde{S}$ are regular.

\subsection{$ST$-form and number of parameters of vertex couplings}

It follows from the $ST$-form of boundary conditions that
if $r_B<n$, then the number of real numbers parametrizing
the family of vertex couplings in a vertex of degree $n$ is reduced from $n^2$ to at most
$n^2-(n-r_B)^2$, cf.~\cite{CET10}.

At the same time, if the boundary conditions are transformed into the reverse $ST$-form~\eqref{reverseST}, one can notice that the number of parameters is bounded above by the value $n^2-(n-r_A)^2$, since this is the total number of free real parameters involved in $\tilde{S}$ and $\tilde{T}$.

There is a natural question on the actual number of free parameters if both $r_B,r_A$ are less than $n$. This question cannot be anwered just with the help of the $ST$-form or its reverse, for that purpose we need to develop another form of boundary conditions, 
which we shall consider in the next section.


\section{$PQRS$-form}\label{PQRSsec}

In the previous section we have come across two problems to them the $ST$-form gives only a partial answer. The reason why the $ST$-form does not lead to the full solutions lies in the fact that it is asymetric with respect to $\rank(A),\rank(B)$: whereas $\rank(B)$ substantially determines its structure, cf.~\eqref{ST}, the value of $\rank(A)$ plays no significant role. In this section we introduce a symmetrized version of the $ST$-form in which both ranks are essentially equally important. The new form of boundary conditions, we will call it \emph{$PQRS$-form}, will then help us to solve the two foregoing problems, namely
\begin{itemize}
\item to find the exact number of free parameters if both $\rank(B)$, $\rank(A)$ are fixed,
\item to expand $\S(k)$ at both $k\to\infty$ and $k=0$ at the same time.
\end{itemize}
The formulation of the $PQRS$-form of boundary conditions follows.

\begin{thm}\label{PQRS-dec}
Let us consider a quantum graph vertex of a degree $n$.
\begin{itemize}
\item[(i)] If $0\leq r_A\leq n$, $0\leq r_B\leq n$, $S\in\C^{r_A+r_B-n,r_A+r_B-n}$ is a self-adjoint matrix
and $P\in\C^{r_A+r_B-n,n-r_B}$, $Q\in\C^{n-r_A,n-r_B}$, $R\in\C^{n-r_A,r_A+r_B-n}$, then the equation
\begin{equation}
\label{PQRS}
\begin{pmatrix}
I^{(r_A+r_B-n)} & 0 & P \\
R & I^{(n-r_A)} & Q \\
0 & 0 & 0
\end{pmatrix}
\Psi'=
\begin{pmatrix}
S & -SR^* & 0 \\
0 & 0 & 0 \\
-P^* & (RP-Q)^* & I^{(n-r_B)}
\end{pmatrix}
\Psi \,.
\end{equation}
expresses admissible boundary conditions. 
\item[(ii)] For any vertex coupling there exist numbers $0\leq r_A\leq n$, $0\leq r_B\leq n$
and a numbering of edges such that the coupling is described by
the boundary conditions \eqref{PQRS} with the uniquely given
matrices $P\in\C^{r_A+r_B-n,n-r_B}$, $Q\in\C^{n-r_A,n-r_B}$, $R\in\C^{n-r_A,r_A+r_B-n}$ and a \emph{regular} self-adjoint matrix $S\in\C^{r_A+r_B-n,r_A+r_B-n}$.
\item[(iii)] Consider a quantum graph vertex of degree $n$ with
the numbering of the edges explicitly given; then there is a
permutation $\Pi\in S_n$ such that the boundary conditions may be
written in the modified form
\begin{equation}
\label{CouplingPQRS}
\begin{pmatrix}
I^{(r_A+r_B-n)} & 0 & P \\
R & I^{(n-r_A)} & Q \\
0 & 0 & 0
\end{pmatrix}
\tilde{\Psi}'=
\begin{pmatrix}
S & -SR^* & 0 \\
0 & 0 & 0 \\
-P^* & (RP-Q)^* & I^{(n-r_B)}
\end{pmatrix}
\tilde{\Psi}
\end{equation}
for
\begin{eqnarray}
\tilde{\Psi}=
\begin{pmatrix}
\psi_{\Pi(1)}(0)\\
\vdots\\
\psi_{\Pi(n)}(0)
\end{pmatrix}
\qquad \tilde{\Psi}'=
\begin{pmatrix}
\psi_{\Pi(1)}'(0)\\
\vdots\\
\psi_{\Pi(n)}'(0)
\end{pmatrix},
\end{eqnarray}
where the regular self-adjoint matrix $S\in\C^{m,m}$ and the matrices 
$P\in\C^{r_A+r_B-n,n-r_B}$, $Q\in\C^{n-r_A,n-r_B}$, $R\in\C^{n-r_A,r_A+r_B-n}$ depend unambiguously on $\Pi$. This formulation
of boundary conditions is in general not unique, since there may
be different admissible permutations $\Pi$, but one can make it
unique by choosing the lexicographically smallest possible permutation
$\Pi$.
\end{itemize}
\end{thm}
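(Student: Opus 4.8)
The plan is to check admissibility throughout by the Kostrykin--Schrader criterion \cite{KS99}: written as $A\Psi+B\Psi'=0$, a coupling is admissible if and only if the $n\times 2n$ matrix $(A\,|\,B)$ has rank $n$ and $AB^*$ is self-adjoint. Abbreviate $m:=r_A+r_B-n$ (which is $\geq 0$ for admissible couplings, since $n=\rank(A|B)\leq r_A+r_B$) and write \eqref{PQRS} as $B\Psi'=\tilde A\Psi$, so that $A=-\tilde A$ and the two criteria become $\rank(\tilde A\,|\,B)=n$ and $B\tilde A^*$ Hermitian. For part (i) I would verify the rank by selecting the columns of $B$ in the first two column-blocks together with the columns of $\tilde A$ in the third block; this $n\times n$ submatrix is the block-triangular $\left(\begin{smallmatrix} I^{(m)} & 0 & 0 \\ R & I^{(n-r_A)} & 0 \\ 0 & 0 & I^{(n-r_B)}\end{smallmatrix}\right)$, manifestly invertible. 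For the self-adjointness I would compute $B\tilde A^*$ blockwise and observe that it collapses to $\mathrm{diag}(S,0,0)$, Hermitian precisely because $S=S^*$; the required cancellations (such as $-P+P=0$ in the $(1,3)$ block and $-RP+RP-Q+Q=0$ in the $(2,3)$ block) are exactly what the entries $-SR^*$ and $(RP-Q)^*$ of $\tilde A$ are engineered to produce. Note that $\rank B=m+(n-r_A)=r_B$ automatically, while $\rank A=(n-r_B)+\rank S$, so $\rank A=r_A$ holds iff $S$ is regular.

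For the existence in part (ii) I would start from the $ST$-form \eqref{ST}, available for every admissible coupling; renaming its submatrices $S,T$ as $S_0,T_0$, the Hermitian block $S_0\in\C^{r_B,r_B}$ must have $\rank S_0=m$ because $\rank A=(n-r_B)+\rank S_0$. The decisive step is to extract a regular $m\times m$ part of $S_0$: a Hermitian matrix of rank $m$ always has a nonsingular $m\times m$ \emph{principal} submatrix, as one sees from the spectral form $S_0=W\Lambda W^*$ with $\Lambda$ regular and $W$ of full column rank $m$ by choosing $m$ independent rows of $W$. Because relabelling the first $r_B$ edges acts on $S_0$ as the congruence $S_0\mapsto P_\pi S_0 P_\pi^{-1}$ (and hence keeps it Hermitian and $ST$-shaped), I may assume this nonsingular block occupies the top-left corner. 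Splitting the first $r_B$ columns into blocks of sizes $m$ and $n-r_A$ and applying the single row operation that subtracts $S_{0,ba}S_{0,aa}^{-1}$ times the first block of rows from the second, I would check that \eqref{ST} turns precisely into \eqref{PQRS} with $S=S_{0,aa}$, $R=-S_{0,ba}S_{0,aa}^{-1}$, $P=(T_0)_a$, $Q=(T_0)_b+R(T_0)_a$; here the vanishing of the Schur complement $S_{0,bb}-S_{0,ba}S_{0,aa}^{-1}S_{0,ab}$, forced by $\rank S_0=m$, is what makes the middle block of rows of $\tilde A$ disappear.

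For uniqueness, a coupling with a fixed edge numbering determines $\mathcal L=\ker(B\,|\,{-}\tilde A)$, hence also the common row space of $(B\,|\,{-}\tilde A)$; any two $PQRS$-representations with the same numbering therefore satisfy $M(B\,|\,\tilde A)=(B'\,|\,\tilde A')$ for a unique invertible $M$. Reading $M$ off the same block-pivot columns as in part (i) gives $M=\left(\begin{smallmatrix} I & 0 & 0 \\ R'-R & I & 0 \\ 0 & 0 & I\end{smallmatrix}\right)$; comparing the remaining blocks forces $P=P'$ and $S=S'$, and then the $(2,1)$ block yields $(R'-R)S=0$, so that the \emph{regularity of $S$} gives $R=R'$, whence $M=I$ and finally $Q=Q'$. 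This is exactly the place where regularity of $S$ is indispensable, since it is the hypothesis that kills the residual gauge freedom; the ranks $r_A=\rank A$, $r_B=\rank B$ are coupling invariants, so both representations necessarily share the same block sizes.

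Part (iii) is then a reformulation of (ii): for a prescribed numbering the numbering produced in (ii) differs from it by some permutation $\Pi\in S_n$, and expressing \eqref{PQRS} in the permuted variables $\tilde\Psi,\tilde\Psi'$ yields \eqref{CouplingPQRS}; the per-numbering uniqueness of (ii) applies to each admissible $\Pi$, and fixing the lexicographically smallest admissible $\Pi$ selects a canonical representative. I expect the existence step of (ii) to be the main obstacle: one must respect that the only moves compatible with the $ST$-form and with a mere relabelling of edges are left-multiplication by invertible matrices and \emph{simultaneous} column permutations, the latter acting on $S_0$ as a congruence. It is precisely this restriction that makes the Hermitian ``nonsingular principal submatrix'' fact --- rather than an arbitrary nonsingular submatrix --- the correct and necessary tool, and getting that interplay right is the crux of the argument.
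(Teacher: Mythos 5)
Your proposal is correct and follows essentially the same route as the paper: reduction of the $ST$-form by an edge renumbering that makes an $(r_A+r_B-n)\times(r_A+r_B-n)$ block of $S_{ST}$ regular, followed by the block row operation (your Schur-complement step is the paper's linear-dependence relation defining $R$) that annihilates the dependent rows, with (i) checked against the Kostrykin--Schrader criterion and (iii) obtained from (ii) by permutation. You are in fact more explicit than the paper in two places it glosses over: the computation $B\tilde A^*=\mathrm{diag}(S,0,0)$ in (i), which the paper calls ``straightforward'', and the uniqueness argument via the transition matrix $M$, where the regularity of $S$ is what forces $R=R'$ --- a point the paper only touches in Remark~\ref{regS} rather than proving.
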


\begin{proof}
We start with the claim (ii).
Consider boundary conditions given in the $ST$-form
\begin{eqnarray}
\begin{pmatrix}
I^{(r_B)} & T_{ST} \\
0 & 0
\end{pmatrix}
\Psi'=
\begin{pmatrix}
S_{ST} & 0 \\
-T_{ST}^* & I^{(n-r_B)}
\end{pmatrix}
\Psi
\end{eqnarray}
where $r_B=\rank(B)\leq n$, $S_{ST}\in\C^{m,m}$ is a self-adjoint matrix
and $T_{ST}\in\C^{m,n-m}$ is a general matrix.

If we denote $r_A=\rank(A)$, we see that $r_A=\rank(S_{ST})+n-r_B$, hence
\begin{eqnarray}
\rank(S_{ST})=r_A+r_B-n\,.
\end{eqnarray}
We may suppose without loss of generality that the first $r_A+r_B-n$ ($=\rank(S_{ST})$) rows of $S_{ST}$ are linearly independent and the remaining $n-r_A$ rows are their linear combinations. If it is not the case, it obviously suffices to apply a simultaneous permutation on first $r_B$ rows and columns of both matrices $A$ and $B$ and renumber the components of $\Psi$, $\Psi'$ in the same manner.
Now we decompose both matrices $A$, $B$ in the following way:
\begin{eqnarray}
\label{ST3}
\begin{pmatrix}
I^{(r_A+r_B-n)} & 0 & T_1 \\
0 & I^{(n-r_A)} & T_2 \\
0 & 0 & 0
\end{pmatrix}
\Psi'=
\begin{pmatrix}
S_{11} & S_{21}^* & 0 \\
S_{21} & S_{22} & 0 \\
-T_1^* & -T_2^* & I^{(n-r_B)}
\end{pmatrix}
\Psi
\end{eqnarray}
where
\begin{eqnarray}
\begin{pmatrix}
T_1 \\
T_2
\end{pmatrix}
=T_{ST}\,,\qquad
\begin{pmatrix}
S_{11} & S_{21}^* \\
S_{21} & S_{22}
\end{pmatrix}
=S_{ST}
\end{eqnarray}
and the sizes of all submatrices are determined by the blocks $I^{(r_A+r_B-n)}$, $I^{(n-r_A)}$ and $I^{(n-r_B)}$.
Since the rows of $(S_{21}\; S_{22})$ are linear combinations of those of $(S_{11}\; S_{21}^*)$ (which are linearly independent), there is a unique matrix $-R\in\C^{n-r_A,r_A+r_B-n}$ such that
\begin{eqnarray}
\label{defR}
\left(S_{21} \; S_{22}\right)=-R\left(S_{11} \; S_{21}^*\right)\,.
\end{eqnarray}
In the next step we multiply the system \eqref{ST3} from the left by the matrix
\begin{eqnarray}
\begin{pmatrix}
I^{(r_A+r_B-n)} & 0 & 0 \\
R & I^{(n-r_A)} & 0 \\
0 & 0 & 0
\end{pmatrix}
\end{eqnarray}
to obtain
\begin{eqnarray}
\label{prePQRS}
\begin{pmatrix}
I^{(r_A+r_B-n)} & 0 & T_1 \\
R & I^{(n-r_A)} & T_2+RT_1 \\
0 & 0 & 0
\end{pmatrix}
\Psi'=
\begin{pmatrix}
S_{11} & S_{21}^* & 0 \\
0 & 0 & 0 \\
-T_1^* & -T_2^* & I^{(n-r_B)}
\end{pmatrix}
\Psi\,.
\end{eqnarray}
We notice that \eqref{defR} gives an explicit relation between $S_{21}$ and $S_{11}$ via the matrix $R$, namely
\begin{eqnarray}
S_{21}=-RS_{11}\,.
\end{eqnarray}
We employ this fact to eliminate $S_{21}^*$ from \eqref{prePQRS}, then we set $T_2+RT_1=Q$ and rename $T_1$ as $P$ and $S_{11}$ as $S$. Herewith we arrive at the sought final form of boundary conditions~\eqref{PQRS}.

It follows from the construction that the matrix $S\in\C^{r_A+r_B-n,r_A+r_B-n}$ is self-adjoint and regular, and $P\in\C^{r_A+r_B-n,n-r_B}$, $Q\in\C^{n-r_A,n-r_B}$, $R\in\C^{n-r_A,r_A+r_B-n}$ are general matrices of given sizes.

Thereby (ii) is proved. Since the claim (iii) can be obtained immediately from (ii) using a
simultaneous permutation of elements in the vectors $\Psi$ and
$\Psi'$, it remains to prove (i). We have
to show that the matrices
\begin{eqnarray}
A=-
\begin{pmatrix}
S & -SR^* & 0 \\
0 & 0 & 0 \\
-P^* & (RP-Q)^* & I^{(n-r_B)}
\end{pmatrix}
\quad\text{and}\quad
B=
\begin{pmatrix}
I^{(r_A+r_B-n)} & 0 & P \\
R & I^{(n-r_A)} & Q \\
0 & 0 & 0
\end{pmatrix}
\end{eqnarray}
satisfy the condition $\rank(A|B)=n$ and that $AB^*$ is self adjoint. Both can be verified in a straightforward way.
\end{proof}

\begin{rem}\label{regS}
If the block with the matrix $S$ is present in the $PQRS$-form (i.e. if $r_A+r_B-n>0$), then it is supposed to be regular. This assumption could be in fact dropped, but we would lose the uniqueness of $R$ then, cf. \eqref{defR}.
\end{rem}
%
%
In the following sections,
we shall demonstrate several applications of the $PQRS$-form.
%

\section{Number of parameters of vertex couplings}

The whole family of vertex couplings in a vertex of degree $n$ may be decomposed into disjoint subfamilies according to the pair $(\rank(A),\rank(B))$; the number of the subfamilies equals $\frac{(n+1)(n+2)}{2}$ by virtue of the condition $\rank(A|B)=n$. Such a decomposition is useful for a study of physical properties of quantum graph vertices: In~\cite{CET09}, a classification of vertex couplings based on the values $\rank(A)$, $\rank(B)$ has been provided for $n=3$, and in Section~\ref{Y} of this paper we extend the ideas to a general $n$.

Each subfamily given by the pair $(\rank(A),\rank(B))$ has certain number of real parameters that is easily determined with the help of the $PQRS$-form: If we just sum up the number of real parameters of the matrices $P$, $Q$, $R$, $S$ involved in~\eqref{PQRS}, we arrive after a simple manipulation at
\begin{eqnarray}
\label{paramnmb}
n^2-(n-r_A)^2-(n-r_B)^2\,.
\end{eqnarray}
This formula shows in a very clear way how the number of parameters of the vertex coupling decreases with decreasing ranks of $A$ and $B$.

\section{Relations to other parametrizations}

The fact that Kostrykin-Schrader conditions $A\Psi+B\Psi'=0$ 
are non-unique inspired various other ways how to write the 
coupling. A commonly used one employs matrices which are functions of
a given unitary $n\times n$ matrix $U$, namely
 \begin{eqnarray} 
 \label{Harmer}
(U-I)\Psi+i(U+I)\Psi'=0\,.
 \end{eqnarray}
In the quantum graph context it was proposed in \cite{H00, KS00},
however, it was known much earlier in the general theory of
boundary value problems \cite{GG91}.

As mentioned in the introduction, alternate conditions using projections
were developed for situations when the matrices in (\ref{AB}) can be singular.
The paper \cite{Ku04} dealt with the case when one matrix is singular, 
general conditions of this type allowing for singularity in both 
matrices were formulated in \cite{FKW07}. Let us recall this result:

\begin{thm0}\nonumber (\cite{FKW07}) For any vertex coupling in a vertex of degree $n$ there are two orthogonal and mutually orthogonal projectors $\mathcal{P}, \mathcal{Q}$ operating in $\C^n$ and an invertible self-adjoint operator $\Lambda$ acting on the subspace $\mathcal{C}\C^n$, where $\mathcal{C}=1-\mathcal{P}-\mathcal{Q}$, such that the boundary conditions can be expressed by the system of equations
\end{thm0}
\begin{subequations}
\begin{eqnarray}
\mathcal{P}\Psi=0\,, \label{projP} \\
\mathcal{Q}\Psi'=0\,, \label{projQ} \\
\mathcal{C}\Psi'=\Lambda \mathcal{C}\Psi\,. \label{projC}
\end{eqnarray}
\end{subequations}

Naturally, different unique descriptions of the coupling are 
mutually related. For instance, it is obvious that the projections
$\mathcal{P}, \mathcal{Q}$ correspond to eigenspaces of $U$ with
the eigenvalues $\mp 1$, respectively, and $\Lambda$ is the part 
of $U$ in the orthogonal complement to them. What is more relevant
here is that Theorem ({\it \cite{FKW07}}) is tightly connected to the $PQRS$ form. Indeed, it apparently holds
\begin{itemize}
\item Eq.~\eqref{projP} corresponds to the $n-r_B$ lines $\left(-P^* \quad (RP-Q)^* \quad I^{(n-r_B)}\right)\Psi=0$ of~\eqref{PQRS}\,,
\item Eq.~\eqref{projQ} corresponds to the $n-r_A$ lines $\left(R \quad I^{(n-r_A)} \quad 0\right)\Psi'=0$ of~\eqref{PQRS}\,.
\end{itemize}
In other words, $\mathcal{P}$ and $\mathcal{Q}$ are projectors on the subspaces generated by the columns of
\begin{eqnarray}
\begin{pmatrix}
-P \\
RP\!-\!Q \\
I^{(n-r_B)}
\end{pmatrix}
\qquad\text{and}\qquad
\begin{pmatrix}
R^* \\
I^{(n-r_A)} \\
Q^*
\end{pmatrix}\,.
\end{eqnarray}

It may not be completely obvious that the different coupling 
classes are characterized by the same number of parameters; note 
that the difference between (\ref{paramnmb}) and the number of parameters 
of the matrix $\Lambda$ equal to $(r_A+r_B-n)^2$ is given by 
\begin{eqnarray} 
\label{extraparam}
\Delta_{A,B} = 2\left[ r_A r_B - (r_A+r_B-n)^2 \right] \,.
\end{eqnarray}
The fact that the difference is positive unless $r_A=r_B=n$ is
due the different setting of the boundary conditions. The PQRS form
works with a partly fixed basis while (\ref{Harmer}) does not, hence
we need extra parameters to fix the ranges of $\mathcal{P}$ and 
$\mathcal{Q}$. We will employ the following elementary result.

\begin{lem}
The number of real parameters required to fix an $M$-dimensional subspace 
of $\mathbb{C}^N$ equals $2M(N-M)$.
\end{lem}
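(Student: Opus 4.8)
The plan is to count the real parameters needed to specify an $M$-dimensional subspace of $\mathbb{C}^N$ by realizing such a subspace as the range (column space) of an $N\times M$ complex matrix of full rank $M$, and then quotient out the redundancy in this representation. First I would observe that every $M$-dimensional subspace can be represented, after a suitable choice of $M$ coordinate axes, as the column space of a matrix in reduced column-echelon form, i.e.\ one that contains an $M\times M$ identity block in the rows indexed by those axes and an arbitrary $(N-M)\times M$ complex block in the remaining rows. This is precisely the affine-chart description underlying the Grassmannian: once we fix which $M$ pivot coordinates carry the identity, the subspace is determined freely by the remaining $(N-M)\times M$ complex entries.

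Next I would count: an $(N-M)\times M$ complex matrix has $(N-M)M$ complex entries, hence $2(N-M)M=2M(N-M)$ real parameters, which is the claimed number. The key point making this a clean count is that within a single such chart the representation is \emph{unique}: the identity block fixes the residual $GL(M,\mathbb{C})$ gauge freedom completely, so there is no overcounting and the free entries are genuine independent real coordinates.

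The main obstacle, and the only place requiring genuine care, is justifying that this local chart count gives the correct \emph{global} parameter count rather than merely an upper or lower bound. Concretely, one must argue that the charts obtained by different choices of the $M$ pivot coordinates overlap on open sets and that the transition maps are smooth with the same dimension, so that the Grassmannian $\mathrm{Gr}(M,N)$ over $\mathbb{C}$ is a manifold of real dimension $2M(N-M)$; equivalently, one shows every subspace lies in at least one chart (it does, since any full-rank matrix has some invertible $M\times M$ minor) and that the count is chart-independent. For the purposes of this paper it suffices to note that a generic subspace is covered by the canonical chart and that the parametrization there is bijective onto $\mathbb{C}^{(N-M)\times M}$, yielding exactly $2M(N-M)$ real parameters.

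An alternative, more intrinsic route I might mention is the homogeneous-space argument: $\mathrm{Gr}(M,N)$ over $\mathbb{C}$ equals $U(N)/\bigl(U(M)\times U(N-M)\bigr)$, whose real dimension is $N^2-M^2-(N-M)^2=2M(N-M)$. This confirms the same answer and sidesteps chart-gluing issues, though it imports the dimension formula for unitary groups; the echelon-form computation is more elementary and self-contained, so I would present that as the primary proof and treat the quotient computation as a consistency check.
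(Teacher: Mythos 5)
Your proposal is correct and is essentially the paper's own argument in expanded form: the paper's proof simply says one must ``fix $N-M$ complex components of each of the $M$ spanning vectors,'' which is precisely your reduced-echelon-form (Grassmannian chart) count of $(N-M)M$ complex, i.e.\ $2M(N-M)$ real, parameters. Your additional care about uniqueness within a chart and the consistency check via $U(N)/\bigl(U(M)\times U(N-M)\bigr)$ go beyond what the paper records, but the underlying approach is the same.
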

\begin{proof}
Note that the expression must be symmetric w.r.t. the interchange 
$M \leftrightarrow N-M$. To determine such a subspace we have to fix 
$N-M$ complex components of the $M$ vectors spanning it, 
which gives the result.
\end{proof}

To get the desired conclusion we apply the lemma twice: first to $n-r_A$ 
vectors spanning the complement to the range of $A$ in $\mathbb{C}^n$,
and then to $n-r_B$ vectors spanning $(\mathrm{Ran}\,B)^\perp =
\mathrm{Ker}\,B^*$ in the remaining $r_A$-dimensional space; it yields
\begin{eqnarray}
2r_A(n-r_A) + 2(n-r_B)(r_A+r_B-n) = \Delta_{A,B}\,,
\end{eqnarray}
hence the numbers of parameters are indeed the same.

\section{Scattering matrix and its expansions for $k\to0$ and $k\to\infty$}

Let us proceed to the scattering matrix expressed in terms of the submatrices $P,Q,R,S$ appearing in the $PQRS$-form.
%
%
In order to have the formula for $\S(k)$ in more compact form, we introduce the following auxiliary matrix $n\times r_A+r_B-n$:
\begin{eqnarray}
X=
\begin{pmatrix}
I^{(r_A+r_B-n)} \\
0 \\
P^*
\end{pmatrix}
-
\begin{pmatrix}
R^* \\
I^{(n-r_A)} \\
Q^*
\end{pmatrix}
\left(I^{(n-r_A)}+RR^*+QQ^*\right)^{-1}(R+QP^*)\,.
\end{eqnarray}
Then a straighforward calculation leads to the following expression for $\S(k)$:
\begin{eqnarray}
\label{S(P,Q,R,S,k)}
\S(k)=
-I^{(n)}+
2
\begin{pmatrix}
R^* \\
I^{(n-r_A)} \\
Q^*
\end{pmatrix}
\left(I^{(n-r_A)}+RR^*+QQ^*\right)^{-1}
\begin{pmatrix}
R & I^{(n-r_A)} & Q
\end{pmatrix}
+
2X\left(X^*X-\frac{1}{\i k}S\right)^{-1}X^*\,.
\end{eqnarray}

\begin{rem}
Formula~\eqref{S(P,Q,R,S,k)} can be in some sense regarded as an explicit version of the ``projector'' formula $\S(k)=-\mathcal{P}+\mathcal{Q}-(\Lambda-\i k)^{-1}(\Lambda+\i k)\mathcal{C}$ found in~\cite{FKW07}. The first two projectors $\mathcal{P},\mathcal{Q}$ have been discussed in the last section, and the third one, $\mathcal{C}$, can be shown to be the orthogonal projector on the subspace generated by the columns of $X$, and thus the term $2X\left(X^*X-\frac{1}{\i k}S\right)^{-1}X^*$ from~\eqref{S(P,Q,R,S,k)} is equal to $-2\i k(\Lambda-\i k)^{-1}\mathcal{C}$.
\end{rem}

In the rest of the section we will calculate the expansions of $\S(k)$ for high and low energies. Let us consider boundary conditions expressed in the $PQRS$-form~\eqref{PQRS}. We suppose that the block $S$ is present (i.e. $\rank(A)+\rank(B)-n>0$); if it is to the contrary, the vertex coupling is scale-invariant and thus independest of $k$.

Similarly as the $ST$-form, the $PQRS$-form allows us to expand $\S(k)$ for high energies,
\begin{eqnarray}
\S(k) = -I^{(n)} + 2
\begin{pmatrix}
R^* \\
I^{(n-r_A)} \\
Q^*
\end{pmatrix}
\left(I^{(n-r_A)}+RR^*\!+\!QQ^*\right)^{-1}
\begin{pmatrix}
R & I^{(n-r_A)} & Q
\end{pmatrix}
+2X\left(X^*X\right)^{-1}X^*
\nonumber \\
+2X\sum_{j=1}^{\infty}\left(\frac{1} {\i k}\right)^j\left[\left(X^*X\right)^{-1}S\right]^j\cdot\left(X^*X\right)^{-1}X^*\,,
\end{eqnarray}
and hence to find the limit of $\S(k)$ for $k\to\infty$,
\begin{eqnarray}
\label{highkS}
\lim_{k\to\infty}\S(k)
= I^{(n)}-
2
\begin{pmatrix}
-P \\
RP-Q \\
I^{(n-r_B)}
\end{pmatrix}
\left[I^{(n-r_B)}+P^*P+(RP\!-\!Q)^*(RP\!-\!Q)\right]^{-1}
\begin{pmatrix}
-P^* & P^*R^*\!\!-\!Q^* & I^{(n-r_B)}
\end{pmatrix}
\end{eqnarray}
(here we have used the identity $\mathcal{P}+\mathcal{Q}+\mathcal{C}=I$ from~\cite{FKW07}).

The advantage of the $PQRS$-form is that one can at the same time obtain the expansion of $\S(k)$ around $k=0$. It suffices to realize that
(note that the matrix $S$ is supposed to be regular, cf. Remark~\ref{regS})
\begin{eqnarray}
\left(X^*X-\frac{1}{\i k}S\right)^{-1}=
\left[\frac{\i}{k}S\left(I^{(r_A+r_B-n)}-\i kS^{-1}X^*X\right)\right]^{-1}
=
-\i k \sum_{j=0}^{\infty}(\i k)^j\left(S^{-1}X^*X\right)^j\cdot S^{-1}
\,;
\end{eqnarray}
then the sought expansion of $\S(k)$ at $k=0$ equals
\begin{eqnarray}
\S(k) = -I^{(n)} + 2
\begin{pmatrix}
R^* \\ I^{(n-r_A)} \\ Q^*
\end{pmatrix}
\left(I^{(n-r_A)}+RR^*+QQ^*\right)^{-1}
\begin{pmatrix}
R & I^{(n-r_A)} & Q
\end{pmatrix}
-2\i kX\left[\sum_{j=0}^{\infty}(\i k)^{j}\left(S^{-1}X^*X\right)^j\right]S^{-1}X^*\,.
\end{eqnarray}
In particular we have
\begin{eqnarray}
\label{lowkS}
\lim_{k\to0}\S(k)=
-I^{(n)}+
2
\begin{pmatrix}
R^* \\
I^{(n-r_A)} \\
Q^*
\end{pmatrix}
\left(I^{(n-r_A)}+RR^*+QQ^*\right)^{-1}
\begin{pmatrix}
R & I^{(n-r_A)} & Q
\end{pmatrix} \, .
\end{eqnarray}
%

\section{Generalized spectral branching filter}\label{Y}
We want to show that the $PQRS$-parametrization is suited to classify singular vertex
in terms of $\delta$ and $\delta^\prime$ connections, since it gives a convenient expression
for the scattering matrix at both $k \to 0$ and $k \to \infty$ limits.
%
%
Let us assume that all elements of $P$ are given by $p$, $Q$, by $q$, and $R$, by $r$,
respectively, where $p$, $q$ and $r$ are taken to be real numbers.
Namely, we set
\begin{eqnarray}
\label{stfrm}
P = p F^{(r_A+r_B-n, n-r_B)}, \quad
Q = q F^{(n-r_A, n-r_B)}, \quad
R = r F^{(n-r_A, r_A+r_B-n)} ,
\end{eqnarray}
where $F^{(m,l)}$ is the matrix of $m$ rows and $l$ columns, that is, of size $l \times m$, 
all of whose elements are equal to $1$.
The F\"ul\"op-Tsutsui limit~\eqref{highkS} together with the identity
\begin{eqnarray}
\left(I^{(m)}+\alpha F^{(m,m)}\right)^{-1} = I^{(m)} - \frac{\alpha}{1+\alpha m} F^{(m,m)}
\end{eqnarray}
allows us to express the scattering amplitudes between any elements of the block $\mu$ and $\nu$ ($\mu,\nu\in\{1,2,3\}$), which we denote ${\cal S}_{\{\mu\}\{\nu\}}(k)$, in the form
\begin{eqnarray}
&&\!\!\!\!\!\!\!\!
\lim_{k\to \infty} |{\cal S}_{\{1\}\{2\}}(k)| =
\frac{2(n-r_B)\,|p|\,|q-(r_A+r_B-n)rp|}{1+l_p \, |p|^2+l_q \, |q-(r_A+r_B-n)rp|^2},
\nonumber \\
&&\!\!\!\!\!\!\!\!
\lim_{k\to \infty} |{\cal S}_{\{2\}\{3\}}(k)|= 
\frac{2\,|q-(r_A+r_B-n)rp|}{1+l_p \, |p|^2+l_q \, |q-(r_A+r_B-n)rp|^2},
\nonumber \\
&&\!\!\!\!\!\!\!\!
\lim_{k\to \infty} |{\cal S}_{\{3\}\{1\}}(k)| = 
\frac{2\,|p|}{1+l_p \, |p|^2+l_q \, |q-(r_A+r_B-n)rp|^2} ,
\end{eqnarray}
with $l_p = (n-r_B)(r_A+r_B-n)$ and $l_q = (n-r_B)(n-r_A)$. 
%
%
%
\begin{figure}
\center{\includegraphics[width=12.5cm]{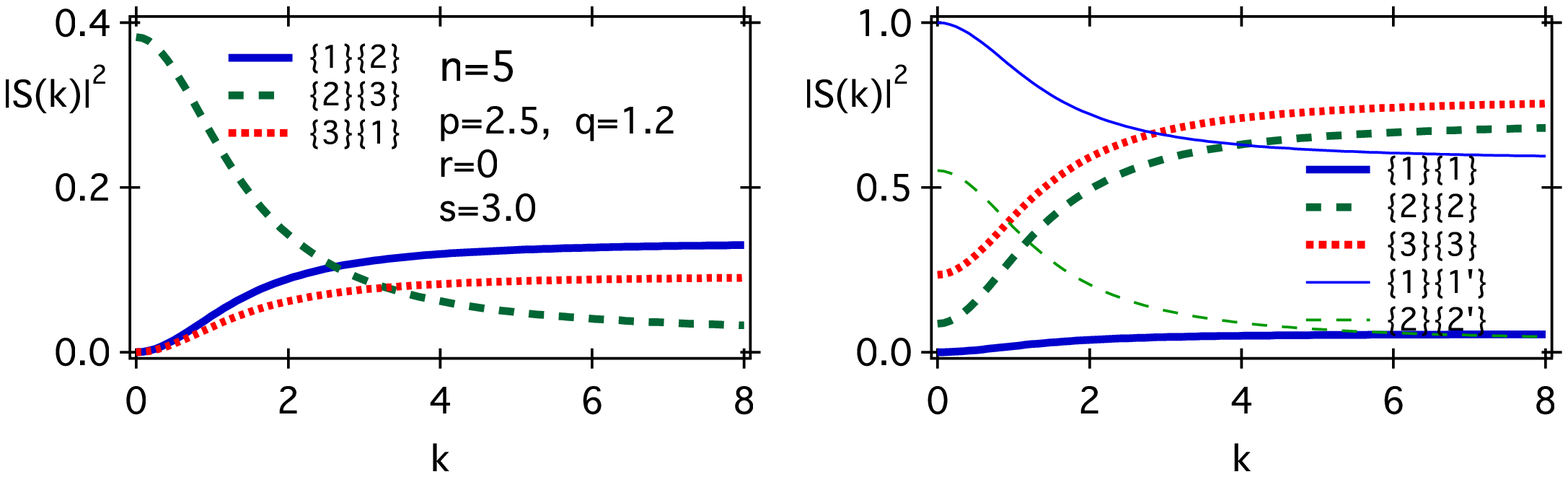}}
\label{fig1}
\caption
{
Quantum scatterings off singular vertex of degree $n=5$.
Boundary condition is given by $PQRS$-form with block devision $2-2-1$.
The block matrices $P$, $Q$, $R$, $S$ are given as constants $p$, $q$, $r$, $s$ times 1-filled matrix $F$, respectively.
Each lines represents transmission/reflection probability between/among blocks \{1\}, \{2\}, and \{3\}. 
This examples shows $\delta\delta\delta'$-type connection among blocks.
}
\end{figure}
Similarly, the scattering amplitudes between any element of the block $\mu$ and $\nu$ 
are identical at $k \to 0$, which
can be read out from the inverse F\"ul\"op-Tsutsui limit (\ref{lowkS}) as
\begin{eqnarray}
&&\!\!\!\!\!\!\!\!
\lim_{k\to 0} |{\cal S}_{\{1\}\{2\}}(k)| =  \frac{2\, |r|}{1+l_r\, |r|^2+l_q\, |q|^2},
\nonumber \\
&&\!\!\!\!\!\!\!\!
\lim_{k\to 0} |{\cal S}_{\{2\}\{3\}}(k)| = \frac{2\, |q|}{1+l_r\, |r|^2+l_q\, |q|^2},
\nonumber \\
&&\!\!\!\!\!\!\!\!
\lim_{k\to 0} |{\cal S}_{\{3\}\{1\}}(k)| =  \frac{2\, (n-r_A)\, |r| \, |q|}{1+l_r\, |r|^2+l_q\, |q|^2}  ,
\end{eqnarray}
with $l_r = (n-r_A)(r_A+r_B-n)$. 

%
%
\begin{figure}
\center{\includegraphics[width=12.5cm]{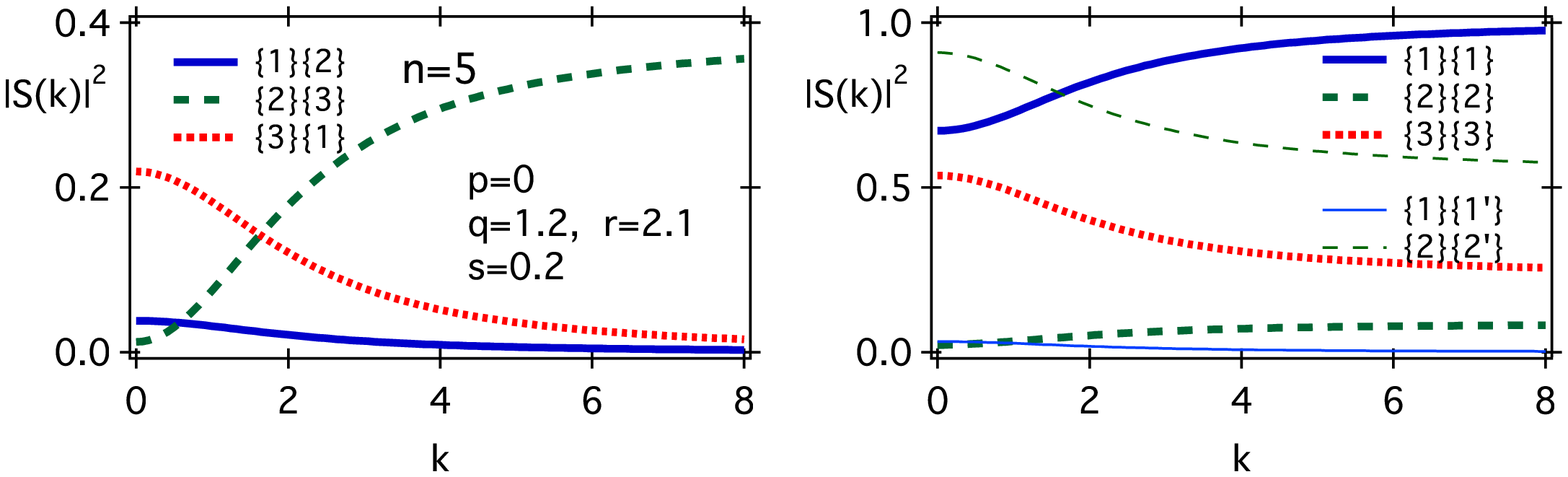}}
\label{fig2}
\caption
{
Quantum scatterings off singular vertex of degree $n=5$.
Boundary condition is given by $PQRS$-form with block devision $2-2-1$.
The block matrices $P$, $Q$, $R$, $S$ are given as constants $p$, $q$, $r$, $s$ times 1-filled matrix $F$, respectively.
Eaxh lines represents transmission/reflection probability between/among blocks \{1\}, \{2\}, and \{3\}. 
This examples shows $\delta\delta'\delta'$-type connection among blocks.
}
\end{figure}
%
These limits give us obvious ways to control the pair-wise transmission probabilities between blocks
of outgoing lines by properly tuning the absolute values of $p$, $q$ and $r$.  
Specifically, {$\delta\delta\delta^\prime$ type vertex is obtained with
\begin{eqnarray}
r=0, q\approx 1, p \gg 1
&&
\nonumber\\
\longrightarrow
&& \frac{2|p|}{1+l_q+l_p\, |p|^2}
\sim |{\cal S}_{\{3\}\{1\}}(\infty)| \sim |{\cal S}_{\{1\}\{2\}}(\infty)| \gg |{\cal S}_{\{2\}\{3\}}(\infty)| , 
\nonumber \\
&& 0= |{\cal S}_{\{3\}\{1\}}(0)| = |{\cal S}_{\{1\}\{2\}}(0)| \ll |{\cal S}_{\{2\}\{3\}}(0)| ,
\end{eqnarray}
with the moderation that $|p|$ is not too large
to keep ${\cal S}_{\{1\}\{2\}}$ $={\cal S}_{\{3\}\{1\}}$ in sizable amount.
The quantum particle entered from the lines in block $\{3\}$ is directed toward 
the lines in block $\{1\}$ when $k$ is small, and is directed toward 
the lines in block $\{2\}$ when $k$ is large, enabling the use of this connection condition as a spectral branching filter.

Similarly, {$\delta\delta^\prime\delta^\prime$ type vertex is obtained,
for example, with
\begin{eqnarray}
p=0, q\approx 1, r \gg 1
&& 
\nonumber\\
\longrightarrow
&& 0=|{\cal S}_{\{1\}\{2\}}(\infty)| = |{\cal S}_{\{3\}\{1\}}(\infty)| \ll |{\cal S}_{\{2\}\{3\}}(\infty)| , 
\nonumber \\
&& \frac{2|r|}{1+l_q+l_r\, |r|^2} 
\sim |{\cal S}_{\{1\}\{2\}}(0)| \sim |{\cal S}_{\{3\}\{1\}}(0)| \gg |{\cal S}_{\{2\}\{3\}}(0)| ,
\end{eqnarray}
with the moderation that $|r|$ is not too large
to keep ${\cal S}_{\{1\}\{2\}}$ $={\cal S}_{\{3\}\{1\}}$ in discernible size. 
In this setting, the quantum particle entered from the lines in block $\{3\}$ is directed toward the lines in block $\{2\}$ when $k$ is small, and is directed toward 
the lines in block $\{1\}$ when $k$ is large.  We can use this connection condition 
again as a spectral branching filter.
These amount to be the generalization of 
$\delta\delta\delta^\prime$ and $\delta\delta^\prime\delta^\prime$ type 
connection for $n=3$ vertex, namely the ``$Y$-junction''.

These limits are illustrated in the numerical examples with $n=5$ singular vertex
in which lines are divided into three blocks of size two, two and one with the choice of
$r_A=3$ and $r_B=4$,
which are shown in Figures 1 and 2.
In Figure 1, we display the transmission and reflection probabilities between lines in various
blocks with the choice $p=2.5$, $q=1.2$, $r=0$ and $s=3.0$, which leads to $\delta\delta\delta'$-type branching.
In Figure 2, we display the results with the choice $p=0$, $q=1.2$, $r=2.1$ and $s=0.2$, which leads to $\delta\delta'\delta'$-type branching.

%

\section{Prospects}

The results outlined in this article can serve as stepping stones for various further works and developments.
It is possible that there are choices of $P$, $Q$ and $R$ other than (\ref{stfrm}) 
that lead to simple expressions for ${\cal S}(k)$, that could help us sorting out physical contents of connection conditions further.
Up to now, the multi-vertex graphs has been considered only with ``free'' connections mostly,
or at best, with $\delta$ connections. 
Examining the system with more than two singular vertices of nontrivial characteristics should be interesting.
In this work, the physical analysis is directed to the scattering properties.
Examination of the bound state spectra, which is given as the purely imaginary poles 
of the ${\cal S}$-matrix, should be high in the list of next agenda.
Many of the singular vertex parameters are complex numbers.  The imaginary part is related to the
``magnetic'' components of the vertex coupling.
In light of the recent finding of exotic quantum holonomy in magnetic point interaction on a line \cite{TC10},
the quantum graph with magnetic vertices may be a rich play ground for phenomena related to the  quantum holonomy.

\section*{Acknowledgments}

The research was supported  by the Japanese Ministry of Education, Culture, Sports, Science and Technology under the Grant number 21540402, and also by the Czech Ministry of Education, Youth and Sports within the project LC06002.


\end{document}